\documentclass[aps,10pt,twocolumn,showpacs,pra,citeautoscript,amsmath,amssymb,floatfix,nofootinbib,superscriptaddress,longbibliography]{revtex4-1}

\usepackage{amsmath}
\usepackage{amssymb}
\usepackage{bbm}
\usepackage{epsfig}
\usepackage{color,epstopdf}
\usepackage{amsthm, thm-restate}

\usepackage{natbib}
\usepackage{hyperref}
\usepackage{cleveref}
\crefname{figure}{figure}{figures}

\usepackage{soul}
\usepackage[caption=false]{subfig}
\usepackage{algpseudocode}
\usepackage{enumerate}

\newcounter{lemmaN}

\newtheorem{lemma}[lemmaN]{Lemma}

\newtheorem*{theorem*}{Main Theorem}

\newcounter{colN}

\newtheorem{corollary}[colN]{Corollary}

\newtheorem{definition}{Definition}

\newcommand{\ket}[1]{|#1\rangle}

\newcommand{\ketbra}[2]{|#1\rangle\langle #2|}
\DeclareMathOperator{\tr}{tr}
\newcommand{\id}{\mathbbm{1}}

\newcommand{\app}[2]{\left( #1, #2 \right)}
\newcommand{\inn}[2]{\langle #1, #2 \rangle}


\newcommand{\reals}{\mathbb{R}}
\newcommand{\comp}{\mathbb{C}}

\newcommand{\trans}[0]{^{\rm T}}

\newcommand{\conj}[0]{^{*}}


\newcommand{\Csa}[1]{\mathbf{H}_{#1}(\mathbb{C})}
\newcommand{\Csap}[1]{\mathbf{H}_{#1}^{+}(\mathbb{C})}

\newcommand{\jord}[0]{\bullet}

\newcommand{\sjord}[0]{\bullet}


\newcommand{\inlineheading}[1]{\textbf{{#1}}}

\newcommand{\IQOQI}{Institute for Quantum Optics and Quantum Information,\\ Austrian Academy of Sciences, Boltzmanngasse 3, A-1090 Vienna, Austria}
\newcommand{\Peri}{Perimeter Institute for Theoretical Physics, 31 Caroline Street North, Waterloo, ON N2L 2Y5, Canada}

\begin{document}
\title{Characterization of the probabilistic models that can be embedded in quantum theory}

\author{Andrew J.\ P.\ Garner}
\affiliation{\IQOQI{}}
\author{Markus P.\ M\"uller}
\affiliation{\IQOQI{}}
\affiliation{\Peri{}}

\date{April 13, 2020}

\begin{abstract}
Quantum bits can be isolated to perform useful information-theoretic tasks,
 even though physical systems are fundamentally described by very high-dimensional operator algebras.
This is because qubits can be consistently embedded into higher-dimensional Hilbert spaces. 
A similar embedding of classical probability distributions into quantum theory enables the emergence of classical physics via decoherence. 
Here, we ask which other probabilistic models can similarly be embedded into finite-dimensional quantum theory.
We show that the embeddable models are exactly those that correspond to the Euclidean special Jordan algebras: quantum theory over the reals, the complex numbers, or the quaternions, and ``spin factors'' (qubits with more than three degrees of freedom), and direct sums thereof. 
Among those, only classical and standard quantum theory with superselection rules can arise from a physical decoherence map.
Our results have significant consequences for some experimental tests of quantum theory, by clarifying how they could (or could not) falsify it. 
Furthermore, they imply that all unrestricted non-classical models must be contextual.
\end{abstract}

\maketitle

\noindent
\inlineheading{Introduction.} 
What can be {\em embedded} into a physical quantum system?
Answering this is crucial for understanding the full variety of information processing possible in quantum physics.
Qubits -- two level quantum systems -- rarely exist in nature, but are often isolated within larger quantum systems (e.g.,\ by choosing just two energy levels of a cold atom's spectrum). 
Meanwhile, quantum error correcting codes embed lower-dimensional systems into higher-dimensional ones to improve resilience to errors; 
 a similar embedding of classical probability theory into quantum theory is a prerequisite for the emergence of classical physics via decoherence processes.
What other logical state spaces could be mapped onto that cold atom? 
What other systems can we find quantum encodings for?

In this letter, we completely characterize the probabilistic models that can be embedded into finite-dimensional quantum theory. 
It turns out that these correspond exactly to a family of structures that are well-known in the foundations of quantum mechanics: the Euclidean special Jordan algebras~\cite{JordanvNW34,AlfsenS03,McCrimmon04,Baez12,Wilce09,Wilce12,BarnumWilce14,BarnumH19}. 
These models arise from the Hilbert spaces over the real, complex and quaternionic~\cite{Graydon11} fields, from state spaces that are $d$-dimensional ``Bloch balls''~\cite{PawlowskiW12,PaterekDB10,KrummM19,GarnerMD17}, and from their direct sums (including classical probability theory and quantum theory with superselection rules~\cite{BartlettRS07}).
  
When we consider the projections onto these embeddings, we find that only complex quantum theory with superselection rules can arise via a physically-realizable completely positive map,
 suggesting why we only typically find quantum and classical behaviours in nature. Furthermore, we consider the models that can also be embedded into {\em classical theory}. We determine that these are only those that are classical themselves, and discuss the implications of this for the study of contextuality in operational theories~\cite{Spekkens05,SchmidSWKW19,Shahandeh19}.

\newpage
\inlineheading{Framework: effects and states.} 
Probabilistic models describe \emph{operational theories} -- theories in which laboratory operations such as preparation and measurement procedures are the primitive elements~\cite{Spekkens05}.
This is mathematically modelled by an ordered vector space~\cite{AliprantisT07} $\left(A,A_+,u_A\right)$, where $A$ is a finite-dimensional vector space, $A_+\subset A$ is the closed and generating cone\footnote{A cone $K$ is a non-empty convex subset of a vector space, satisfying $K+K\subseteq K$, $\alpha K \subseteq K$ for all $\alpha\geq 0$, and $K\cap (-K)= \{0\}$~\cite{AliprantisT07}. 
It is generating if it spans the full vector space (otherwise we would restrict our attention to a subspace). 
For a physical motivation of topological closure, see~\citet{MasanesMueller2011}.
} of {\em positive} elements and $u_A\in A_+$ is the order unit.
The elements $a\in A_+$ are the (unnormalized) \emph{effects}, and they correspond to the possible outcomes that can appear in any measurement of the operational theory. 
In particular, the unit effect $u_A$ corresponds to the affirmative answer to the question ``is the system there?''.
Thus, an ($n$-outcome) measurement is a collection of effects $\{a_i\}_{i=1\ldots n}$ satisfying $\sum_{i=1}^n a_i = u_A$.

A {\em state} is a functional $\omega$ that represents the possible statistics arising from a class of preparation procedures.
That is, a state assigns probabilities to every effect $a\in A_+$ by $\app{\omega}{a} := \omega\!\left(a\right)$. 
As probabilistic mixtures of preparations must yield the correctly-weighted outcome probability~\cite{Barrett07}, $\omega$ is a linear functional, i.e.\ an element of the dual vector space $A\conj$. 
Then, $\mathcal{A}$'s set of possible (unnormalized) states is also a cone, $A\conj_+ := \{\omega \in A\conj \;|\; \left(\omega, e\right) \geq 0 \forall \,e\,\in A_+ \}$. 
A state $\omega\in A_+\conj$ is normalized if $\app{\omega}{u_A} = 1$, and the convex set of all normalized states is denoted $\Omega_A$. This formalism adopts the \emph{no-restriction hypothesis}~\cite{JanottaLal13}: all those objects which give non-negative probabilities on all effects are valid states, and vice versa. 

In this language, an $n$-level \emph{quantum model} $\mathcal{Q}_n$ 
 has the effect space $(B,B_+,u_B)$, where $B=\Csa{n}$ is the vector space of complex self-adjoint (i.e.\ Hermitian) $n\times n$ matrices, $B_+ =\Csap{n}$ is the cone of positive--semidefinite Hermitian matrices (i.e.\ non-negative multiples of POVM elements~\cite{NielsenC00}), and $u_B=\id_n$ is the $n\times n$ identity matrix. 
We identify $B$ and $B\conj$ via the Hilbert--Schmidt inner product $\inn{x}{y} := {\rm tr}(x y)$, and as a result, $B\conj_+ = B_+ =\Csap{n}$.
Quantum states are thus also represented by positive semidefinite Hermitian {\em density matrices}, and $\Omega_B$ consists of normalized states such that $\tr\rho = 1$.
Meanwhile, an $n$-level classical model $\mathcal{C}_n$ has an effect space $B' :=  \reals^n$, 
 with a positive simplex cone $B'_+:=S_n$ of vectors with no negative elements, and the unit effect $u_{B'} = \left(1, \ldots, 1\right)\trans$. 
Via the usual `dot' inner product, the classical state space is also $\reals^n$, and the normalized states are the probability vectors $(p_1,\ldots,p_n)$ with $p_i\geq 0$ and $\sum_i p_i=1$. 
Finally, quantum models with superselection rules~\cite{BartlettRS07} correspond to effect spaces $\bigoplus_i \Csa{n_i}$, which can be thought of as block matrices in some basis. These contain classical models as special cases, since $\reals^n\simeq \bigoplus_{i=1}^n \Csa{1}$.

\inlineheading{Framework: Jordan algebras}.
A Jordan algebra $\left(\mathcal{J},\jord\right)$ consists of some set $\mathcal{J}$ and a product $\jord$ that is commutative ($x\jord y = y \jord x$ for all $x,y\in\mathcal{J}$) and satisfies the Jordan identity ($\left(x \jord y\right)\jord\left(x\jord x\right) = x\jord(y\jord(x\jord x))$ for all $x,y\in\mathcal{J}$). For reasons that will become clear, we shall restrict our discussion to finite-dimensional $\mathcal{J}$. Such $\mathcal{J}$ is Euclidean if there exists an inner product with the  property that $\inn{x}{z\sjord y} = \inn{z\sjord x}{y}$ for all $x,y,z \in \mathcal{J}$~\cite{FarautK94}. The cone of positive elements in a Euclidean Jordan algebra, $\mathcal{J}_+$, is defined as the set of all squares $\mathcal{J}_+ :=  \{x^2\,\,|\,\,x\in\mathcal{J}\}$, where $x^2:=x\sjord x$. The \emph{Jordan unit} is the unique element $u_{\mathcal{J}}\in\mathcal{J}$ with the property that $u_{\mathcal{J}} \jord x =x$ for all $x\in\mathcal{J}$. We can view $\mathcal{J}$ as a probabilistic model, with effect space $(\mathcal{J},\mathcal{J}_+,u_{\mathcal{J}})$. A Jordan algebra is \emph{special} if it is isomorphic to a Jordan subalgebra of an associative algebra~\cite{McCrimmon04}.

For example, the set of complex self-adjoint matrices $\Csa{n}$ equipped with the product $x\sjord y := \frac{1}{2}\left(xy + yx\right)$ for $x,y\in\Csa{n}$ (where $xy$ is standard matrix multiplication) is a Euclidean special Jordan algebra, where the Euclidean inner product is the usual Hilbert--Schmidt product $\inn{x}{y} =\tr\left(xy\right)$.

\inlineheading{Simulations and embeddings}.
Suppose we simulate the statistical behavior of model $\mathcal{A}$ using another model $\mathcal{B}$  -- that is, the preparation of any state $\omega_A$ of $\mathcal{A}$ can be simulated by some corresponding preparation $\omega'_B$ on $\mathcal{B}$ (and likewise every measurement with outcome effect $a$ can be simulated by the corresponding effect $b'$ of $\mathcal{B}$).
Then, there will be two maps: a map $\psi:A_+^*\to B_+^*$ that assigns to every state $\omega_A$ the corresponding state $\omega'_B=\psi(\omega_A)$; and a map $\varphi:A_+\to B_+$ that assigns to every effect $a$ the corresponding effect $b'=\varphi(a)$,
 such that these maps preserve all outcome probabilities [ $b'(\omega'_B)=a(\omega_A)$ ].

This formulation encodes an assumption of {\em procedural independence}: while there may be different operational procedures to prepare a state $\omega_A$ of $\mathcal{A}$, we assume that all of them will be simulated with the \emph{same} state $\omega'_B$ of $\mathcal{B}$. 
When taken with the convexity of state (and effect) spaces, this has a particular consequence.
Consider, for example, preparing $\rho_A:=\frac{1}{2} \omega_A^{(1)}+\frac{1}{2} \omega_A^{(2)}$ by tossing an ideal coin and preparing either $\omega_A^{(1)}$ or $\omega_A^{(2)}$, depending on the outcome. 
This means that we can prepare ${\rho}'_B$ in $\mathcal{B}$ that reproduces the statistics of ${\rho}_A$ by tossing a coin and preparing either ${\omega'}_B^{(1)}$ or ${\omega'}_B^{(2)}$, depending on the outcome, where ${\omega'}_B^{(i)}=\psi(\omega_A^{(i)})$. 
Procedural independence then implies that ${\rho}'_B=\psi(\rho_A)$ -- that is, $\psi$ must be a convex-linear map. 
We can then extend $\psi$ linearly to the full space $A^*$, and similar argumentation applies to the effect map $\varphi$. We hence define such procedurally independent simulations that also preserve normalization as an {\em embedding}:

\begin{definition}[Embedding]
\label{def:EmbeddingMap}
For models $\mathcal{A}$ and $\mathcal{B}$ with respective effect spaces $\left(A,A_+,u_A\right)$ and $\left(B,B_+,u_B\right)$ and respective dual state spaces $\left(A\conj, A\conj_+,\Omega_A\right)$ and $\left(B\conj, B\conj_+,\Omega_B\right)$, a pair of linear maps $\varphi: A \to B$ and $\psi: A\conj \to B\conj$ is said to {\bf embed} $\mathcal{A}$ into $\mathcal{B}$ if:-
\begin{enumerate}[(i)]
\item $\varphi$ and $\psi$ are {\em positive} ($\varphi(A_+)\subseteq B_+$ and $\psi(A\conj_+)\subseteq B\conj_+$) and $\varphi$ is {\em unital} ($\varphi(u_A)=u_B$),
 \item $\varphi$ and $\psi$ {\em preserve outcome probabilities}; i.e.\ for all $e\in A_+$, $\omega \in A_+\conj$, $(\omega,e)=(\psi(\omega),\varphi(e))$.
\end{enumerate}
\end{definition}

This definition has a few immediate consequences. 
First, by linearity (ii) will hold also for all $e\in A$ and $\omega\in A^*$. Second, $(\psi(\omega),u_B)=(\psi(\omega),\varphi(u_A))=(\omega,u_A)$, and hence $\psi(\Omega_A)\subseteq \Omega_B$.
 Furthermore,
\begin{lemma}
\label{lem:EmbeddingFacts}
For an embedding $\mathcal{A}\to\mathcal{B}$ with maps $\varphi:A\to B$ and $\psi:A\conj \to B\conj$ (as per \cref{def:EmbeddingMap}):
\begin{enumerate}[(i)]
\item $\psi\conj\varphi=\mathbf{1}_A$ and $\psi\conj\!\left(B_+\right) \subseteq A_+$ -- i.e.\ the dual of $\psi$ is a positive left-inverse of $\varphi$, and thus $\dim A \leq \dim B$. Likewise, $\varphi\conj\psi = \mathbf{1}_{A\conj}$ and $\varphi\conj (B\conj_+) \subseteq A\conj_+$.
\item The map $P:=\varphi\psi\conj: B \to B$ is a positive unital projection onto the image of effects $\varphi\!\left(A\right)$ ($P^2 = P$, $P B_+ \subseteq B_+$, $P u_B = u_B$, and $P(B) =\varphi(A)$).
Similarly, $P\conj = \psi\varphi\conj: B\conj \to B\conj$ is a positive projection onto the image of states $\psi\!\left(A\conj\right)$ ($P^{*2} = P\conj$, $P\conj B\conj_+ \subseteq B\conj_+$, and $P^*(B\conj)=\psi(A^*)$).
\item $\varphi(A_+)=\varphi(A)\cap B_+=P(B_+)$.
\end{enumerate}
\begin{proof}
{\bf (i):} 
$(\omega,e)=(\psi(\omega),\varphi(e))=(\omega,\psi^* \varphi(e))$ for all $e\in A, \omega\in A\conj$, and therefore $\psi\conj\varphi=\mathbf{1}_A$.
The positivity of $\psi\conj$ follows from that of $\psi$:
Let $b\in B_+$, such that $(x,b) \geq 0$ for all $x\in B\conj_+$.
Then, $(\psi(a),b) \geq 0$ for all $a\in A\conj_+$
 and hence $(a,\psi^*(b))\geq 0$ for all $a\in A\conj_+$, implying that $\psi\conj(b)\in A_+$, and hence $\psi^*$ is positive.
Similar holds for the dual.
{\bf (ii):} First, $P^2 = \varphi\psi\conj\varphi\psi\conj = \varphi\psi\conj = P$. 
As both $\varphi$ and $\psi\conj$ are positive, so is $P$.
As $\psi\conj\varphi \equiv \mathbf{1}_A$, then $P \varphi\!\left(a\right) = \varphi \psi\conj  \varphi \!\left(a\right) = \varphi\!\left(a\right)$ for all $a\in A$. 
This shows that $\varphi(A)\subseteq {\rm im}\, P$. Conversely, if $b\in {\rm im}\, P$ define $a:=\psi^*(b)$, then $\varphi(a)=\varphi\psi^*(b)=Pb=b$, i.e.\ $b\in\varphi(A)$.
To show unitality, 
 apply $\varphi$ to both sides of $\psi\conj\varphi\!\left(u_A\right) = u_A$ to yield $\varphi \psi\conj \varphi\!\left(u_A\right) = \varphi\!\left(u_A\right)$,
  then use $\varphi\!\left(u_A\right) = u_B$ to conclude $\varphi \psi\conj u_B = u_B$.
Similar reasoning establishes $P\conj$ as a positive projector onto $\psi\!\left(A\conj\right)$.
{\bf (iii):} $\varphi(A_+)\subseteq \varphi(A)\cap B_+$ is trivial, and if $b\in\varphi(A)\cap B_+=P(B)\cap B_+$ then $b=Pb\in P(B_+)$. For the converse inclusions, we have $P(B_+)\subseteq B_+$ due to positivity of $P$ and $P(B_+)\subseteq P(B)=\varphi(A)$. To see that $\varphi(A)\cap B_+\subseteq \varphi(A_+)$, let $a\in A$ and $\varphi(a)\in B_+$, then for all $\omega\in A_+\conj$, we have $(\omega,a)=(\psi(\omega),\varphi(a))\geq 0$, hence $a\in A_+$.
\end{proof}
\end{lemma}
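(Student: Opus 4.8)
The plan is to derive every assertion from two ingredients: the probability-preservation identity of \cref{def:EmbeddingMap}(ii), extended by linearity so that $(\omega,e)=(\psi(\omega),\varphi(e))$ holds for \emph{all} $e\in A$ and $\omega\in A\conj$; and the self-duality of closed generating cones, namely that $a\in A_+$ precisely when $(\omega,a)\geq 0$ for every $\omega\in A\conj_+$ (together with the mirror statement characterizing $A\conj_+$). This second fact is the one genuinely analytic input, and it is where the closedness of the cones does its work; everything else is bookkeeping with adjoints. To keep that bookkeeping under control I would fix at the outset the conventions $(\omega,\psi\conj(b))=(\psi(\omega),b)$ and $(\varphi\conj(\eta),e)=(\eta,\varphi(e))$, using the canonical finite-dimensional identifications of $A$ and $B$ with their biduals, so that $\psi\conj:B\to A$ and $\varphi\conj:B\conj\to A\conj$.

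For part (i), I would read the preservation identity as $(\omega,e)=(\psi(\omega),\varphi(e))=(\omega,\psi\conj\varphi(e))$; since this holds for every $\omega\in A\conj$, non-degeneracy of the dual pairing forces $\psi\conj\varphi=\mathbf{1}_A$, and injectivity of $\varphi$ then gives $\dim A\leq\dim B$. To establish positivity of $\psi\conj$ I would take $b\in B_+$ and test $\psi\conj(b)$ against an arbitrary $\omega\in A\conj_+$: the adjoint relation rewrites $(\omega,\psi\conj(b))$ as $(\psi(\omega),b)$, which is non-negative because $\psi(\omega)\in B\conj_+$ by positivity of $\psi$ and $b\in B_+$. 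Applying the self-duality characterization then upgrades ``non-negative on all of $A\conj_+$'' to the conclusion $\psi\conj(b)\in A_+$. The companion statements $\varphi\conj\psi=\mathbf{1}_{A\conj}$ and $\varphi\conj(B\conj_+)\subseteq A\conj_+$ follow from the mirror-image argument, now testing against effects in $A_+$.

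Part (ii) is then purely formal and I expect no obstacle. I would obtain $P^2=P$ from $P^2=\varphi(\psi\conj\varphi)\psi\conj=\varphi\psi\conj=P$; positivity of $P$ by composing the two positive maps $\psi\conj$ and $\varphi$ supplied by (i); unitality from $\psi\conj(u_B)=\psi\conj\varphi(u_A)=u_A$, whence $Pu_B=\varphi(u_A)=u_B$; and the image identity $P(B)=\varphi(A)$ by noting that $\mathrm{im}\,P\subseteq\mathrm{im}\,\varphi$ while $P$ fixes each $\varphi(a)$, again via $\psi\conj\varphi=\mathbf{1}_A$. The dual assertions about $P\conj=\psi\varphi\conj$ come from running the identical steps on the adjoint data.

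Finally, part (iii) is an inclusion chase. The inclusions $\varphi(A_+)\subseteq\varphi(A)\cap B_+$ and $P(B_+)\subseteq\varphi(A)\cap B_+$ are immediate from positivity of $\varphi$, positivity of $P$, and $P(B)=\varphi(A)$; and $\varphi(A)\cap B_+\subseteq P(B_+)$ holds because any $b$ in the intersection lies in $\mathrm{im}\,P$, so $b=Pb\in P(B_+)$. The one substantive direction, where I expect the real content to sit alongside the positivity arguments of (i), is $\varphi(A)\cap B_+\subseteq\varphi(A_+)$: given $a\in A$ with $\varphi(a)\in B_+$, I would show $a\in A_+$ by testing against an arbitrary $\omega\in A\conj_+$, using preservation to write $(\omega,a)=(\psi(\omega),\varphi(a))\geq 0$, and then appealing once more to the self-duality of $A_+$. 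The main obstacle throughout is therefore not any isolated hard computation but keeping the adjoint and pairing conventions consistent, and recognizing that all three positivity claims collapse onto the same closed-cone bidual identity.
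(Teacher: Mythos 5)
Your proposal is correct and takes essentially the same route as the paper's proof: the same adjoint identity $\psi\conj\varphi=\mathbf{1}_A$ from probability preservation, the same positivity-by-duality arguments (the paper invokes the closed-cone double-dual fact implicitly where you state it explicitly), and the same inclusion chase for part (iii). One terminological caveat: what you call ``self-duality of closed generating cones'' is really the bipolar (double-dual) property $A_+=(A\conj_+)\conj$, not self-duality in the usual sense of $A_+\cong A\conj_+$; the content you use is nevertheless exactly right.
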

For the complementary case of embeddings into infinite-dimensional classical models, similar results are obtained in the upcoming work of \citet{BarnumL20}.

\inlineheading{Examples of embedding into quantum theory.}
Of particular interest are the embeddings into quantum models, since these seem to be what nature provides us with. 
For example, the quantum-error correcting Shor code~\cite{Shor95} maps a single logical qubit onto nine physical qubits ($\psi: \Csa{2} \to \bigotimes_{i=1}^9 \Csa{2}$) in such a way as to allow for a random bit flip ($\sigma_x$) and/or phase flip ($\sigma_z$) on any of the nine physical qubits without affecting the encoded logical information.
Here both $\psi$ and $\varphi$ take the form $X\mapsto V X V^\dagger$, where $V$ is an isometry.

A second example is the inclusion of $n$-level classical probability theory within $n$-level quantum theory. There is a positive unital linear map $\psi: \reals^n \to \Csa{n}$ -- specifically onto the $n\times n$ diagonal matrices $\psi: (p_1,\ldots,p_n)\trans\mapsto \rho := \sum_i p_i \ketbra{i}{i}$ for some choice of basis $\{\ket{i}\}_{i=1\ldots n}$. 
Since diagonal elements are never negative for (non-negative multiples of) valid quantum states, $\varphi\conj: \rho \mapsto \left(\langle 0|\rho|0\rangle, \ldots \langle n|\rho|n\rangle  \right)\trans$ is also a positive map.

The models that can be embedded into quantum theory are not limited to classical theory, and quantum theory of a lower dimension.
For example, the $d$-dimensional {\em spin-factor} models $\mathcal{B}_d$, whose normalized states are given by $d$-dimensional balls $B_d := \{ (1, \vec{x})\,\, |\,\, \lVert \vec{x} \rVert_2 \leq 1 \}\subseteq \reals\oplus\reals^d$ often arise as foil theories to quantum theory, generalizing the $3$-dimensional real ``Bloch ball'' representation of a qubit into higher dimensions (see e.g.,\ \cite{PawlowskiW12,PaterekDB10,GarnerMD17,KrummM19}).
The effect space  of $\mathcal{B}_d$ is 
 $\left(\reals\oplus\reals^d, B_{d+}, \vec{u}_d \right)$ 
 where $B_{d+}: = \{\left(n, \vec{x}\right)\trans \in \reals\oplus\reals^d \;|\; n\geq 0,\, \lVert \vec{x} \rVert_2 \leq n \}$ 
 and $\vec{u}_d=\left(1,0,\ldots 0 \right)\trans$.
However, a spin factor {\em can} be embedded into complex quantum theory: specifically, into $\mathcal{Q}_{2^{d/2}}$ for even $d$ and $\mathcal{Q}_{2^{\left(d-1\right)/2}}$ for odd $d$ (see \cite{Tsirelson87,KleinmannOSW13,BarnumGW16}).

What about {\em gbits}~\cite{Barrett07}, i.e.,\ models $\mathcal{A}$ with square state spaces that arise in quantum information theory as marginals of hypothetical maximally nonlocal Popescu-Rohrlich boxes~\cite{PopescuR94}? 
The {\em Holevo projection~\cite{Holevo82}} achieves some sort of simulation of those models on \emph{classical} four-level models $\mathcal{B}= \mathcal{C}_4$. 
Namely, the gbit effects $A_+$ are embedded via some map $\varphi$, such that the image $\varphi(A_+)$ consists of the effects $(x_1,x_2, x_3,x_4)\in B_+$ where $x_1+x_2=x_3+x_4$. 
Hence, there is a dual map $\varphi^*$ that maps the classical states (elements of the tetrahedron of four-outcome probability vectors) onto gbit states. 
Could there be some corresponding $\psi$ that maps gbit states to classical states such that all probabilities are reproduced? 
Since the four corner states of the gbit are pairwise perfectly distinguishable, this is only possible if the four deterministic distributions are contained in the image $\psi(\Omega_A)$. 
But the image of a two-dimensional square under any \emph{linear} map $\psi$ cannot contain three linearly independent elements, 
 hence $\psi$ must be non-linear. 
In other words, the Holevo construction violates {\em procedural independence}: almost every gbit state has \emph{infinitely many} classical states that simulate it,
 and which one is the case depends on the specific preparation procedure.

\inlineheading{Every embedding into quantum theory.}
Is there a structural reason that the attempt to embed the gbit failed, but classical theory and the spin factors succeeded? 
To answer this, we first introduce a formal way of saying that an embedding into quantum theory should not be ``unnecessarily large'':

\begin{definition}[Minimal embedding]
\label{def:Min}
An embedding of a model $\mathcal{A}$ into $n$-dimensional quantum theory $\mathcal{Q}_n$ is \textbf{minimal} if there does not exist any $m<n$ such that $\mathcal{A}$ can be embedded into $\mathcal{Q}_m$.
\end{definition}
When we embed into quantum models, we may always choose the smallest possible Hilbert space dimension. Thus, we will henceforth restrict our attention to minimal embeddings. This has the following consequence.
\begin{lemma}
\label{lem:QMinimal}
If an embedding of a model $\mathcal{A}$ into $\mathcal{Q}_n$ is minimal, then there exists some state $\omega\in\Omega_A$ such that the quantum state $\psi\!\left(\omega\right)$ has full rank.
\end{lemma}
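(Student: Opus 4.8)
The plan is to argue by contraposition: I would show that if \emph{no} normalized state of $\mathcal{A}$ is sent to a full-rank density matrix, then the whole embedding can be compressed onto a strictly smaller Hilbert space, contradicting minimality.

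First I would concentrate all the support into a single state. Since $\Omega_A$ is a non-empty compact convex subset of the finite-dimensional space $A\conj$, its relative interior is non-empty, so I pick any $\omega^*$ in it. For a \emph{positive} combination of positive-semidefinite matrices, the support (range) of the sum is the span of the individual supports. Every $\omega\in\Omega_A$ appears with strictly positive weight in some convex decomposition of a relative-interior point $\omega^*$ (extend the segment from $\omega$ slightly past $\omega^*$), so $\psi(\omega^*)=\lambda\,\psi(\omega)+(1-\lambda)\,\psi(\omega'')$ with $\lambda>0$ and $\omega''\in\Omega_A$, whence $\mathrm{supp}(\psi(\omega))\subseteq\mathrm{supp}(\psi(\omega^*))$ for every $\omega$. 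Setting $S:=\mathrm{supp}(\psi(\omega^*))\subseteq\comp^n$ and $m:=\dim S$, this gives $S=\mathrm{span}\bigcup_{\omega\in\Omega_A}\mathrm{supp}(\psi(\omega))$; all image states are supported on $S$, and $\psi(\omega^*)$ has rank exactly $m$.

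Next I would compress the embedding onto $S$. Let $\Pi$ be the orthogonal projector onto $S$, and define $\psi_S:=\psi$ (whose image already lives on $S$) together with $\varphi_S(a):=\Pi\,\varphi(a)\,\Pi$, both now regarded as acting on the $m$-dimensional Hilbert space $S$, i.e.\ as an embedding into $\mathcal{Q}_m$. Positivity of $\varphi_S$ is inherited from that of $\varphi$ because compression preserves positive semidefiniteness, and $\varphi_S(u_A)=\Pi\,\id_n\,\Pi=\id_S$ secures unitality. For the outcome probabilities, since $\psi(\omega)=\Pi\,\psi(\omega)\,\Pi$ for all $\omega$, cyclicity of the trace yields $\tr\!\big(\psi(\omega)\,\Pi\varphi(e)\Pi\big)=\tr\!\big(\psi(\omega)\,\varphi(e)\big)=(\omega,e)$, so condition (ii) of \cref{def:EmbeddingMap} holds. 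Hence $(\varphi_S,\psi_S)$ embeds $\mathcal{A}$ into $\mathcal{Q}_m$.

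Finally, minimality forbids $m<n$, forcing $m=n$, so $\psi(\omega^*)$ is full rank. The routine parts are the relative-interior/support bookkeeping and verifying that the compressed maps meet \cref{def:EmbeddingMap}. The step I expect to be the crux is the claim that pointwise rank-deficiency of \emph{every} image state forces a common invariant subspace, i.e.\ that a single interior state $\omega^*$ already exhausts all the supports; this is precisely what upgrades a state-by-state rank condition into one global dimension reduction, and it relies essentially on the span-of-supports property of positive combinations of positive-semidefinite operators.
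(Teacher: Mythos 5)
Your proof is correct and takes essentially the same approach as the paper's: both arguments isolate a single state whose support $S$ absorbs the supports of all image states, then restrict the embedding to $S$ and use minimality to force $\dim S = n$. The only difference is how that dominant state is found --- the paper takes a state of maximal rank and derives a contradiction from the same span-of-supports fact you invoke, whereas you take a relative-interior point of $\Omega_A$ --- and your explicit check that $\varphi_S(\cdot)=\Pi\,\varphi(\cdot)\,\Pi$ and $\psi_S=\psi$ satisfy \cref{def:EmbeddingMap} merely fills in what the paper calls restricting the embedding ``in an obvious way.''
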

\begin{proof}
Let $\omega\in A_+$ such that $m:={\rm rank}(\psi(\omega))$ is maximal, and suppose that $m<n$. Let $S:={\rm supp}(\psi(\omega))$ (an $m$-dimensional subspace of $\comp^n$), and suppose there is some $\rho\in A_+$ with ${\rm supp}(\psi(\rho))\not\subseteq S$. Since ${\rm supp}(\psi(\frac 1 2 \omega+\frac 1 2 \rho))\supseteq {\rm supp}(\psi(\omega))+{\rm supp}(\psi(\rho))\supsetneq S$, this implies ${\rm rank}(\psi(\frac 1 2 \omega+\frac 1 2 \rho))> m$ which is a contradiction. Thus, ${\rm supp}(\psi(\rho))\subseteq S$ for all $\rho\in A_+$, and we can restrict the embedding to $S$ (and thus to $\mathcal{Q}_m$) in an obvious way.
\end{proof}	

We can now prove the main result of this letter: the probabilistic models that can be embedded in quantum theory correspond to the Euclidean special Jordan algebras~\cite{JordanvNW34,AlfsenS03,McCrimmon04,Wilce09,Baez12,Wilce12,BarnumWilce14,BarnumH19}.
We start with two technical lemmas, the first proven in similar form in~\cite{EffrosS79,Idel13}:
\begin{lemma}
\label{lem:EffSt}
For every minimal embedding of a model $\mathcal{A}$ into finite-dimensional quantum theory $\mathcal{Q}_n$, the corresponding projector $P=\varphi\psi\conj$ satisfies
\begin{align}
   P\!\left(x\sjord y\right)=x\sjord P\left(y\right) \mbox{ for all }x\in\varphi(A),\, y\in B.
\end{align}
Hence, $\varphi(A)\equiv P(B)$ is closed under the Jordan product $\sjord$, and $\left(P\!\left(B\right),\, \sjord\right)$ is a special Euclidean Jordan algebra.
\end{lemma}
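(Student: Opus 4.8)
The plan is to show that $P$ acts like a conditional expectation with respect to the Jordan product, by combining a Schwarz-type inequality with a faithful invariant state supplied by minimality. First I would invoke Kadison's inequality: since $P$ is a positive unital linear map on $B=\Csa{n}$, every self-adjoint $y$ satisfies $P(y^2)\geq P(y)^2$ in the positive--semidefinite order. Crucially this holds for merely \emph{positive} (not necessarily completely positive) unital maps, which is exactly the generality available here; this is the Schwarz-type estimate established in similar form in~\cite{EffrosS79,Idel13}.

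Next I would produce a faithful invariant state. By minimality and \cref{lem:QMinimal} there is $\omega\in\Omega_A$ with $\sigma:=\psi(\omega)$ of full rank. Since $P\conj=\psi\varphi\conj$ projects onto $\psi(A\conj)$ and $\sigma\in\psi(A\conj)$, we have $P\conj(\sigma)=\sigma$, whence $\tr(P(b)\sigma)=\inn{P(b)}{\sigma}=\inn{b}{P\conj\sigma}=\tr(b\sigma)$ for all $b\in B$; thus $\sigma$ is a faithful state preserved by $P$. Now for any fixed point $x=P(x)$ (that is, $x\in\varphi(A)=P(B)$), Kadison's inequality gives $P(x^2)-x^2=P(x^2)-P(x)^2\geq 0$, while invariance of $\sigma$ forces $\tr\!\big((P(x^2)-x^2)\sigma\big)=\tr(P(x^2)\sigma)-\tr(x^2\sigma)=0$. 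A positive--semidefinite operator with vanishing trace against a full-rank $\sigma$ must itself be zero, so $P(x^2)=x^2$ for every $x\in\varphi(A)$.

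The heart of the argument is to upgrade this fixed-point identity $P(x^2)=P(x)^2$ into the full module property. I would introduce the symmetric bilinear defect form $D(x,y):=P(x\sjord y)-P(x)\sjord P(y)$ on self-adjoint elements, for which Kadison's inequality reads $D(x,x)\geq 0$. Composing with an arbitrary state $\rho$ turns $x\mapsto\tr(\rho\,D(x,x))$ into a scalar positive--semidefinite quadratic form, so the ordinary Cauchy--Schwarz inequality yields $|\tr(\rho\,D(x,y))|^2\leq \tr(\rho\,D(x,x))\,\tr(\rho\,D(y,y))$. Hence $D(x,x)=0$ implies $\tr(\rho\,D(x,y))=0$ for every state $\rho$ and every self-adjoint $y$, and since states span, $D(x,y)=0$. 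Applying this with $x\in\varphi(A)$, for which $D(x,x)=0$ by the previous paragraph and $P(x)=x$, gives exactly $P(x\sjord y)=x\sjord P(y)$ for all $y\in B$, the claimed identity.

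The structural conclusions then follow at once. For $x,y\in\varphi(A)=P(B)$ we get $x\sjord y = x\sjord P(y)=P(x\sjord y)\in P(B)$, so $\varphi(A)$ is closed under $\sjord$; it contains the unit $\id_n=\varphi(u_A)$, and it inherits the Hilbert--Schmidt inner product, which still obeys $\inn{x}{z\sjord y}=\inn{z\sjord x}{y}$ by the trace identity. Thus $(P(B),\sjord)$ is a Jordan subalgebra of the special Euclidean Jordan algebra $(\Csa{n},\sjord)$, and any such subalgebra is again special and Euclidean. I expect the main obstacle to be the passage from the fixed-point equality to the module identity: it requires both the correct invocation of Kadison's inequality in the merely positive (rather than completely positive) setting, and the delicate Cauchy--Schwarz step that promotes the operator inequality $D(x,x)\geq 0$ — valid only after pairing with states — into the pointwise vanishing $D(x,y)=0$.
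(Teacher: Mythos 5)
Your proof is correct and takes essentially the same route as the paper's: Kadison's inequality for positive unital maps, combined with the full-rank invariant state supplied by minimality (Lemma~\ref{lem:QMinimal}), yields $P(x^2)=x^2$ on $\varphi(A)$, and a polarization argument then upgrades this to the module property $P(x\sjord y)=x\sjord P(y)$. The only difference is cosmetic: the paper runs the polarization at the operator level (applying Kadison to $z=tx+y$ and using that $tv+w\geq 0$ for all $t\in\reals$ forces $v=0$), whereas you pair with states first and apply the scalar Cauchy--Schwarz inequality before letting states span --- the same quadratic-in-$t$ discriminant trick in different packaging.
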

\begin{proof}
Due to Lemma~\ref{lem:EmbeddingFacts}, $\dim A<\infty$.
First, we show that all $x\in\varphi(A)$ satisfy $P(x^2)=x^2$. 
From Lemma~\ref{lem:QMinimal}, there exists some full-rank fixed state $\rho = P^*\!\left(\rho\right)$, and hence $\tr[P(x^2)\rho]=\tr[x^2 P^*(\rho)]=\tr(x^2\rho)$,
 such that $\tr(\Delta\rho)=0$ for $\Delta:=P(x^2)-x^2$.
Due to {\em Kadison's inequality}~\cite{Kadison52}, $P(z^2)\geq P(z)^2$, hence $\Delta\geq 0$.
Thus, $\tr(\Delta\rho)=0$ is only possible if $\Delta=0$ since $\rho$ is positive definite.

Now let $x\in\varphi(A)$, $y\in B$, and $t\in\mathbb{R}$ be arbitrary, and set $z:=tx+y$. We thus have $x=P(x)$ and $x^2=P(x^2)$.
Since $P$ is positive and unital (Lemma~\ref{lem:EmbeddingFacts}), Kadison's inequality gives
$2t P(x\sjord y)+P(y^2)\geq 2 t P(x)\sjord P(y)+P(y)^2$
for all $t\in\mathbb{R}$. But if $v=v^\dagger$ and $w=w^\dagger$ such that $tv+w\geq 0$ for all $t\in\mathbb{R}$, then $v=0$ (to see this, multiply from left and right by eigenvectors of $v$). Thus, we conclude that the terms linear in $t$ must be equal, and so $P(x\sjord y)=P(x)\sjord P(y)$.

Thus, if $x,y\in P(B)$ then $x\sjord y=P(x)\sjord P(y)=P(x\sjord y)\in P(B)$, and hence $\left(P\!\left(B\right), \, \sjord\right)$ is a Jordan subalgebra of $\Csa{n}$, inheriting the properties of being special and Euclidean from $\Csa{n}$.
\end{proof}

\begin{lemma}
\label{lem:ConeOfSquares}
For every minimal embedding of a model $\mathcal{A}$ into finite-dimensional quantum theory, we have
\begin{align}
P\!\left(B_+\right) & = \{x^2\,\,|\,\,x\in P\!\left(B\right)\}.
\end{align}
\end{lemma}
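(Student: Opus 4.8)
The plan is to recognize the right-hand side $\{x^2 \mid x \in P(B)\}$ as precisely the Jordan-algebraic cone of positive elements of the Euclidean Jordan algebra $\mathcal{J} := (P(B), \sjord)$ furnished by \cref{lem:EffSt}, and then to show that this cone coincides with the intersection of $\mathcal{J}$ with the matrix positive cone. Using \cref{lem:EmbeddingFacts}(iii) I would first rewrite $P(B_+) = \varphi(A) \cap B_+ = \mathcal{J} \cap \Csap{n}$, so that the claim reduces to $\mathcal{J} \cap \Csap{n} = \{x^2 \mid x \in \mathcal{J}\}$. The inclusion $\supseteq$ is immediate: for $x \in \mathcal{J}$ the Jordan square $x \sjord x$ equals the ordinary matrix square $x^2$ (since $x$ is Hermitian), which is positive semidefinite, and it lies in $\mathcal{J}$ because $\mathcal{J}$ is closed under $\sjord$; hence $x^2 \in \mathcal{J} \cap \Csap{n}$.

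The substantive direction is $\mathcal{J} \cap \Csap{n} \subseteq \{x^2 \mid x \in \mathcal{J}\}$, and here the real work is to reconcile Jordan-algebraic positivity with matrix positivity. The key observation is that the Jordan unit of $\mathcal{J}$ is the full identity matrix: by \cref{lem:EmbeddingFacts}(ii), $\id_n = u_B = P u_B \in \mathcal{J}$, and since $\id_n \sjord x = x$ for every $x$, uniqueness of the Jordan unit gives $u_{\mathcal{J}} = \id_n$. Next I would invoke the spectral theorem for Euclidean Jordan algebras to write an arbitrary $c \in \mathcal{J}$ as $c = \sum_j \lambda_j c_j$ with real $\lambda_j$ and a Jordan frame $\{c_j\}$ of orthogonal idempotents summing to $u_{\mathcal{J}} = \id_n$.

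Because $\mathcal{J}$ is special, each idempotent satisfies $c_j \sjord c_j = c_j$, i.e.\ $c_j^2 = c_j$ as a matrix, so each $c_j$ is a Hermitian projection; a short computation shows that Jordan-orthogonality $c_j \sjord c_k = 0$ forces the matrix product $c_j c_k = 0$, so the $c_j$ are mutually orthogonal projections whose ranges partition $\comp^n$ (as they sum to $\id_n$). Consequently the matrix eigenvalues of $c$ are exactly the $\lambda_j$, and $c \in \Csap{n}$ holds if and only if every $\lambda_j \geq 0$. In that case $x := \sum_j \sqrt{\lambda_j}\, c_j$ lies in $\mathcal{J}$ and satisfies $x \sjord x = c$, exhibiting $c$ as a Jordan square and completing the inclusion.

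I expect the main obstacle to be exactly this identification of the two notions of positivity -- showing that the spectrum computed inside the abstract Jordan algebra $\mathcal{J}$ agrees with the genuine matrix spectrum in $\Csa{n}$. Everything hinges on the fact that the Jordan unit is the full $\id_n$ (so that a Jordan frame is a resolution of the identity rather than a projection onto a proper subspace), together with the elementary observation that Jordan-orthogonal idempotents in a special algebra are honestly orthogonal projections; once these are in place, the spectral argument and the construction of the Jordan square root are routine.
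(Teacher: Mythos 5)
Your proof is correct, but it takes a genuinely different route for the substantive inclusion. The paper, like you, begins by identifying the right-hand side with the cone of squares $\mathcal{J}_+$ of the Euclidean special Jordan algebra $\mathcal{J}=(P(B),\sjord)$ furnished by Lemma~\ref{lem:EffSt}, and its proof of $\mathcal{J}_+\subseteq P(B_+)$ matches yours up to bookkeeping (it uses the identity $P(x\sjord y)=x\sjord P(y)$ to get $y=P(y)$, where you use Lemma~\ref{lem:EmbeddingFacts}(iii) directly). For the converse inclusion $P(B_+)\subseteq\mathcal{J}_+$, however, the paper argues by duality rather than spectrally: since the Hilbert--Schmidt inner product makes $\mathcal{J}$ Euclidean, the cone of squares $\mathcal{J}_+$ is self-dual within $\mathcal{J}$ (a fact cited from Faraut--Koranyi), and any $y\in P(B_+)$ satisfies $\inn{x^2}{y}=\tr(x^2y)\geq 0$ for all $x\in P(B)$ because both $x^2$ and $y$ are positive semidefinite matrices, whence $y\in\mathcal{J}_+^*=\mathcal{J}_+$. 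You instead invoke the spectral theorem for Euclidean Jordan algebras, and use specialness plus unitality ($u_{\mathcal{J}}=\id_n$, since $P u_B=u_B$) to show that a Jordan frame is a genuine resolution of the identity by mutually orthogonal Hermitian projections, so the abstract Jordan spectrum coincides with the matrix spectrum; this lets you build an explicit Jordan square root $\sum_j\sqrt{\lambda_j}\,c_j$. Both routes rest on one substantial imported theorem (self-duality of the symmetric cone versus the Jordan spectral theorem); the paper's duality argument is shorter, while yours is constructive and makes explicit a fact the paper only obtains implicitly, namely that Jordan positivity in $\mathcal{J}$ and matrix positivity in $\Csa{n}$ agree on $P(B)$. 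Your verification that Jordan-orthogonal idempotents multiply to zero as matrices (via $pqp=(qp)^\dagger(qp)=0$) is exactly the point that makes this identification work.
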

\begin{proof}
The right-hand side equals the cone of squares $\mathcal{J}_+$ of $\left(P\!\left(B\right), \sjord\right)$ due to Lemma~\ref{lem:EffSt}.
To show $\mathcal{J}_+\subseteq P(B_+)$, let $y:=x^2$ with $x\in P(B)$. 
Then $0\leq y=x\sjord P(x)=P(x\sjord x)=P(y)$ (using Lemma~\ref{lem:EffSt}), and thus $y\in P(B_+)$. 
Meanwhile, using the Hilbert-Schmidt inner product $\inn{x}{y} = \tr(xy)$ to identify $B$ with $B\conj$,
 the cone $\mathcal{J}_+$ is self-dual~\cite{FarautK94} (i.e.,\ $\mathcal{J}_+=\mathcal{J}_+\conj$) since this inner product makes $\mathcal{J}$ Euclidean.
Let $y\in P(B_+)$.
Then, for all $x\in P(B)$, $\inn{x^2}{y} = \tr(x^2 y)\geq 0$ since $x^2\geq 0$ and $y\geq 0$, and thus $y\in \mathcal{J}_+^* \equiv \mathcal{J}_+$, and thus $P(B_+)\subseteq \mathcal{J}_+$.
Hence, $P\!\left(B_+\right) = \mathcal{J_+} = \{x^2\,\,|\,\,x\in P\!\left(B\right)\}$.
\end{proof}

Thus, we state the main theorem of this letter:-
\begin{theorem*}
\label{thm:Simulate}
A model can be embedded into finite-dimensional quantum theory if and only if it corresponds to a Euclidean special Jordan algebra.
\begin{proof}
For the {\em only if} direction, we can choose a minimal embedding $\varphi: A\to \Csa{n}$. From Lemma~\ref{lem:EmbeddingFacts} (iv) and Lemma~\ref{lem:ConeOfSquares}, it follows that $\varphi(A_+)=\{x^2\,\,|\,\,x\in\varphi(A)\}$, hence $A$ is order-isomorphic to the probabilistic model of the special Euclidean Jordan algebra $(P(B),\sjord)$.
 To show the {\em if} direction, we use that such algebras can be exhaustively listed~\cite{JordanvNW34}. Appropriate embedding maps exist for all of these spaces~\cite{Tsirelson87,BarnumGW16,KleinmannOSW13}, and their direct sums.
\end{proof}
\end{theorem*}
\inlineheading{Bipartite correlations.}
\Citet{BarnumBBEW10} show that any bipartite system that looks locally like quantum theory can only admit non-signalling correlations that are quantumly realizable. 
\Citet{KleinmannOSW13} extend this to systems whose local models can be embedded into quantum theory as in our \cref{def:EmbeddingMap}.
Our result thus implies that any non-signalling composite $AB$ of Jordan--algebraic models $A$ and $B$ (e.g.,\ of quaternionic quantum theory) can only contain correlations that can be reproduced within standard complex quantum theory (even if $AB$ is not itself embeddable into quantum theory).

\inlineheading{Embedding and decoherence.}
Recall that the projector $P=\varphi\psi\conj$ (resp.\ $P\conj=\psi\varphi\conj$) maps the set of all quantum effects (states) onto an embedding of the Jordan algebra effect space $A$ (state space $A\conj$), i.e.\ $P\!\left(\Csa{n}\right)=\varphi(A)$ and $P\conj(\Csa{n})=\psi(A\conj)$.
In the special case where $A$ is classical (and standardly embedded as in the second embedding example further above), this is exactly a \emph{decoherence process}, where $P=P\conj$ removes the off-diagonal elements (in some given basis).

Are there analogous ``decoherence'' processes for the (minimal) embeddings of other Jordan--algebraic models $\mathcal{A}$? 
If so, these would have to be completely positive~\cite{Stinespring55,Choi75,NielsenC00} (i.e.\ physically realizable) unital maps $Q$ with $Q^2=Q$ (decohering twice is the same as decohering once), with $Q(\Csa{n})=\varphi(A)$. 
Since $Q\conj(\Csa{n})=\psi(A\conj)$ and $(Q\conj)^2=Q\conj$, the map $Q\conj$ has a full-rank fixed point, and since $Q^2=Q$, the set of fixed points of $Q$ equals $Q(M_n(\mathbb{C}))=\varphi(A)+i\varphi(A)$, where $M_n(\mathbb{C})$ denotes the complex $n\times n$ matrices.
But according to~\cite[Thm.\ 6.12]{Wolf12} (see also~\cite{Asias02}), the fixed-point sets of such completely positive maps $Q$ are ${}\conj$-subalgebras of $M_n(\mathbb{C})$, and thus isomorphic to standard complex quantum theory with superselection rules (including classical theory). Thus, the other, more exotic projectors are ruled out:
\begin{corollary}
\label{eq:Decoherence}
The only quantum--embeddable probabilistic models which can result from a physical decoherence map are the classical state spaces, and standard complex quantum theory with superselection rules.	
\end{corollary}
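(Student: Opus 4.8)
The plan is to reduce the statement to the classification of finite-dimensional $*$-subalgebras of $M_n(\comp)$, using complete positivity to upgrade the merely Jordan-algebraic structure guaranteed by the Main Theorem into a genuine associative $*$-algebra. So I would start by supposing $\mathcal{A}$ is quantum-embeddable and arises from a physical decoherence map. By the Main Theorem I may fix a minimal embedding with maps $\varphi: A \to \Csa{n}$ and $\psi: A\conj \to B\conj$, where the decoherence map is a completely positive, unital, idempotent $Q$ with $Q(\Csa{n}) = \varphi(A)$. The goal is to prove that $\varphi(A)$ is isomorphic to $\bigoplus_i \Csa{n_i}$; since every such self-adjoint block algebra is complex quantum theory with superselection rules (classical theory when all $n_i=1$), this settles the corollary and simultaneously excludes real and quaternionic quantum theory and the non-qubit spin factors, whose embedded images are not of this form.

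First I would complexify. Extending $Q$ to all of $M_n(\comp)$ by $\comp$-linearity, idempotency $Q^2=Q$ makes the fixed-point set coincide with the image, and a matrix $x+iy$ with $x,y$ Hermitian is fixed precisely when $x,y\in\varphi(A)$; hence $\mathrm{Fix}(Q)=Q(M_n(\comp))=\varphi(A)+i\varphi(A)$. The entire problem is thus to show that this complex subspace is closed under ordinary matrix multiplication and adjoints, i.e.\ that it is a $*$-subalgebra of $M_n(\comp)$. This is the crux: a $*$-algebra is far more rigid than a Euclidean special Jordan algebra, and it is exactly this rigidity that rules out the exotic models.

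The key lever is the fixed-point algebra theorem for completely positive maps, \cite[Thm.~6.12]{Wolf12}: the fixed-point set of a unital completely positive $Q$ is a $*$-subalgebra of $M_n(\comp)$ provided the dual $Q\conj$ possesses a faithful (full-rank) fixed state. I would supply this faithful fixed state from minimality. Since a decoherence process must send quantum states onto the embedded states, $Q\conj$ is the state-space projection with $Q\conj(\Csa{n})=\psi(A\conj)$, and Lemma~\ref{lem:QMinimal} furnishes some $\omega\in\Omega_A$ with $\psi(\omega)$ of full rank. As $Q\conj$ is idempotent with image $\psi(A\conj)$, the full-rank state $\psi(\omega)$ is automatically fixed, which is exactly the hypothesis the theorem needs. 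Invoking it then shows $\varphi(A)+i\varphi(A)$ is a $*$-subalgebra.

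Finally I would apply Artin--Wedderburn: every finite-dimensional $*$-subalgebra of $M_n(\comp)$ is, up to unitary conjugation, a direct sum of full matrix algebras $\bigoplus_i M_{n_i}(\comp)$, whose self-adjoint part is $\bigoplus_i \Csa{n_i}$. Because $\varphi(A)$ is the self-adjoint part of $\varphi(A)+i\varphi(A)$, I conclude $\varphi(A)\cong\bigoplus_i\Csa{n_i}$, as claimed. I expect the main obstacle to lie not in this final classification but in rigorously justifying the faithful-fixed-state hypothesis: one must argue that a genuine decoherence map realizing the embedding forces $Q\conj(\Csa{n})=\psi(A\conj)$, so that the full-rank state of Lemma~\ref{lem:QMinimal} is indeed fixed, and one must check that the $\comp$-linear extension of $Q$ to $M_n(\comp)$ remains completely positive so that \cite[Thm.~6.12]{Wolf12} applies verbatim. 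This is precisely where complete positivity, rather than the mere positivity of the projector $P$, does the real work.
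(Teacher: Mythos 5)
Your proposal is correct and takes essentially the same route as the paper: both arguments characterize the decoherence map as a completely positive unital idempotent $Q$, identify its fixed-point set with $\varphi(A)+i\varphi(A)$, obtain the faithful fixed state of $Q\conj$ from the minimality result (Lemma~\ref{lem:QMinimal}), and invoke \cite[Thm.~6.12]{Wolf12} to conclude that the fixed points form a ${}\conj$-subalgebra of $M_n(\mathbb{C})$, i.e.\ complex quantum theory with superselection rules. Your explicit complexification and Artin--Wedderburn steps, and your caveat that the decoherence map must satisfy $Q\conj\!\left(\Csa{n}\right)=\psi(A\conj)$, simply spell out details that the paper's compressed argument leaves implicit.
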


\inlineheading{Classical embeddings.}
A second corollary of the Main Theorem is a straightforward characterization of all models that can be embedded into classical theory.
\begin{corollary}
A model can be embedded into finite-dimensional classical theory if and only if it is classical.
\begin{proof}
Suppose that $\mathcal{A}$ can be embedded into some $\mathcal{C}_n$ via maps $\varphi,\psi$. 
Since $\mathcal{C}_n$ can be embedded into $\mathcal{Q}_n$ via some $\varphi',\psi'$, this gives us an embedding of $\mathcal{A}$ into $\mathcal{Q}_n$ via $\varphi'\circ\varphi,\psi'\circ\psi$. 
Hence, due to the Main Theorem, $\mathcal{A}$ must correspond to a Euclidean special Jordan algebra. 
But Lemma~\ref{lem:EmbeddingFacts}(iii) tells us that $\varphi(A_+)=\varphi(A)\cap C_+$, where $C_+$ is the polyhedral cone~\cite{AliprantisT07} of classical effects. Hence $A_+$ must be a polyhedral cone too, i.e.\ $A_+$ contains only a finite number of extremal effects. But the only Jordan-algebraic effect spaces with finitely many extremal effects are the classical effect spaces.
\end{proof}
\end{corollary}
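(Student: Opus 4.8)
The plan is to prove the equivalence by reducing the nontrivial direction to the Main Theorem and then isolating the classical case through the polyhedral geometry of the classical cone. The \emph{if} direction is immediate: a classical model is embedded into classical theory of the same size by the identity maps $\varphi=\psi=\mathbf{1}$. So I would concentrate entirely on the \emph{only if} direction.

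Suppose $\mathcal{A}$ embeds into some $\mathcal{C}_n$ via $\varphi$ and $\psi$. First I would observe that classical theory itself sits inside quantum theory: the diagonal embedding $\mathcal{C}_n\hookrightarrow\mathcal{Q}_n$ from the second embedding example supplies positive, unital, probability-preserving maps $\varphi',\psi'$. Since composites of positive unital maps are again positive and unital and composites preserve outcome probabilities, $(\varphi'\circ\varphi,\,\psi'\circ\psi)$ is an embedding of $\mathcal{A}$ into $\mathcal{Q}_n$. The Main Theorem then forces $\mathcal{A}$ to correspond to a Euclidean special Jordan algebra, so it only remains to determine \emph{which} one.

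The key geometric input comes next. By Lemma~\ref{lem:EmbeddingFacts}(iii), $\varphi(A_+)=\varphi(A)\cap C_+$, where $C_+=S_n$ is the nonnegative orthant of classical effects. Intersecting the polyhedral cone $C_+$ with the subspace $\varphi(A)$ yields a polyhedral cone, so $\varphi(A_+)$ has only finitely many extremal rays. Because Lemma~\ref{lem:EmbeddingFacts}(i) gives $\psi\conj\varphi=\mathbf{1}_A$, the map $\varphi$ is injective and hence an order-isomorphism onto its image; transporting back, $A_+$ is itself polyhedral, with finitely many extremal effects.

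The hard part, and where I expect the real content to lie, is to show that polyhedrality of the positive cone singles out the classical case among Euclidean special Jordan algebras. I would settle this from the classification~\cite{JordanvNW34}: every simple factor of rank at least two---real, complex, or quaternionic quantum theory with $n\geq 2$ levels, or a spin factor---has a smooth positive cone whose extremal rays (the primitive idempotents) form a positive-dimensional manifold, hence a continuum. The only rank-one simple factor is $\Csa{1}\simeq\reals$, whose cone is a single ray, and a direct sum has polyhedral cone exactly when every summand does. Therefore a polyhedral Jordan cone must be a direct sum of copies of $\reals$, i.e.\ $A_+\simeq S_k$, which is precisely the classical effect space---identifying $\mathcal{A}$ as classical and closing the argument.
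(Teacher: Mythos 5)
Your proposal is correct and follows essentially the same route as the paper's proof: compose with the diagonal embedding $\mathcal{C}_n\hookrightarrow\mathcal{Q}_n$, invoke the Main Theorem to get a Euclidean special Jordan algebra, use Lemma~\ref{lem:EmbeddingFacts}(iii) to conclude that $A_+$ is polyhedral, and then note that only classical effect spaces among Jordan-algebraic ones are polyhedral. The one difference is that you supply the classification-based justification for this last step (rank-$\geq 2$ simple factors have a continuum of extremal rays, direct sums are polyhedral iff all summands are), which the paper asserts without proof---a worthwhile addition, but not a different approach.
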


Embeddings into classical models are of interest in the foundations of quantum mechanics since they formalize a notion of \emph{hidden-variable models}~\cite{HarriganS10}: given some operational theory (e.g.,\ quantum theory), one may ask whether all its statistics can consistently be understood as arising from unknown underlying classical probability  distributions and response functions. 
In this case, our notion of procedural independence is identical to the well-known condition of \emph{non-contextuality}~\cite{Spekkens05}: statistically indistinguishable preparation (resp.\ measurement) procedures ought to be represented by identical classical distributions (resp.\ response functions)~\cite{SchmidSWKW19}. 
Famously, there are no non-contextual hidden-variable models for quantum theory, and the above corollary shows that this conclusion extends to \emph{all} non-classical probabilistic models that satisfy the no-restriction hypothesis if the hidden-variable model is assumed to be discrete. 
This has recently been proven with alternative methods in~\cite{Shahandeh19} and \cite{BarnumL20}; here it follows as a simple corollary.

\inlineheading{Discussion.}
Our result has significant implications for experimental tests of quantum mechanics. 
Suppose that we isolate a degree of freedom in the laboratory and determine its probabilistic model by attempting to implement as many states and effects as we can, as described in~\citet{MazurekPRS17}.
If we obtain a model whose states and effect spaces are not full duals of each other (e.g.,\ as in stabilizer quantum theory), then this means that we may simply not have looked hard enough to discover all possible states and effects. 
On the other hand, if the model is unrestricted, but is not a standard quantum model, we may ask whether we have uncovered genuine new physics or whether the model could be simply simulated by standard quantum physics.
What our Main Theorem then tells us is: \emph{If an unrestricted model is not Jordan-algebraic, then a quantum simulation is implausible for the same reason that contextual hidden-variable models for quantum theory are implausible}. 
Namely, such a simulation would have to represent statistically identical preparation (or measurement) procedures by different quantum states (or effects) which are fine-tuned~\cite{WoodS15} to yield the exact same statistics.

Our result therefore underlines the physical significance of Euclidean special Jordan models: it characterizes them as the unique unrestricted models that can be embedded into standard quantum theory in a way that respects procedural independence.

\vspace*{0.25em}
\inlineheading{Acknowledgments.}
We are grateful for discussions with Howard Barnum, who drew our attention to Ref.~\cite{Idel13}, and thank Alex Wilce for very helpful and detailed remarks on an earlier draft of this paper.
This research was supported in part by Perimeter Institute for Theoretical Physics. 
Research at Perimeter Institute is supported by the Government of Canada through the Department of Innovation, Science and Economic Development Canada and by the Province of Ontario through the Ministry of Research, Innovation and Science.

\bibliography{embedding}

\begin{thebibliography}{41}%
\makeatletter
\providecommand \@ifxundefined [1]{%
 \@ifx{#1\undefined}
}%
\providecommand \@ifnum [1]{%
 \ifnum #1\expandafter \@firstoftwo
 \else \expandafter \@secondoftwo
 \fi
}%
\providecommand \@ifx [1]{%
 \ifx #1\expandafter \@firstoftwo
 \else \expandafter \@secondoftwo
 \fi
}%
\providecommand \natexlab [1]{#1}%
\providecommand \enquote  [1]{``#1''}%
\providecommand \bibnamefont  [1]{#1}%
\providecommand \bibfnamefont [1]{#1}%
\providecommand \citenamefont [1]{#1}%
\providecommand \href@noop [0]{\@secondoftwo}%
\providecommand \href [0]{\begingroup \@sanitize@url \@href}%
\providecommand \@href[1]{\@@startlink{#1}\@@href}%
\providecommand \@@href[1]{\endgroup#1\@@endlink}%
\providecommand \@sanitize@url [0]{\catcode `\\12\catcode `\$12\catcode
  `\&12\catcode `\#12\catcode `\^12\catcode `\_12\catcode `\%12\relax}%
\providecommand \@@startlink[1]{}%
\providecommand \@@endlink[0]{}%
\providecommand \url  [0]{\begingroup\@sanitize@url \@url }%
\providecommand \@url [1]{\endgroup\@href {#1}{\urlprefix }}%
\providecommand \urlprefix  [0]{URL }%
\providecommand \Eprint [0]{\href }%
\providecommand \doibase [0]{http://dx.doi.org/}%
\providecommand \selectlanguage [0]{\@gobble}%
\providecommand \bibinfo  [0]{\@secondoftwo}%
\providecommand \bibfield  [0]{\@secondoftwo}%
\providecommand \translation [1]{[#1]}%
\providecommand \BibitemOpen [0]{}%
\providecommand \bibitemStop [0]{}%
\providecommand \bibitemNoStop [0]{.\EOS\space}%
\providecommand \EOS [0]{\spacefactor3000\relax}%
\providecommand \BibitemShut  [1]{\csname bibitem#1\endcsname}%
\let\auto@bib@innerbib\@empty
\bibitem [{\citenamefont {Jordan}\ \emph {et~al.}(1934)\citenamefont {Jordan},
  \citenamefont {von Neumann},\ and\ \citenamefont {Wigner}}]{JordanvNW34}%
  \BibitemOpen
  \bibfield  {author} {\bibinfo {author} {\bibfnamefont {P.}~\bibnamefont
  {Jordan}}, \bibinfo {author} {\bibfnamefont {J.}~\bibnamefont {von Neumann}},
  \ and\ \bibinfo {author} {\bibfnamefont {E.}~\bibnamefont {Wigner}},\
  }\bibfield  {title} {\enquote {\bibinfo {title} {{On an Algebraic
  Generalization of the Quantum Mechanical Formalism}},}\ }\href@noop {}
  {\bibfield  {journal} {\bibinfo  {journal} {Annals of Mathematics}\ }\textbf
  {\bibinfo {volume} {35}},\ \bibinfo {pages} {29--64} (\bibinfo {year}
  {1934})}\BibitemShut {NoStop}%
\bibitem [{\citenamefont {Alfsen}\ and\ \citenamefont
  {Shultz}(2003)}]{AlfsenS03}%
  \BibitemOpen
  \bibfield  {author} {\bibinfo {author} {\bibfnamefont {E.~M.}\ \bibnamefont
  {Alfsen}}\ and\ \bibinfo {author} {\bibfnamefont {F.~W.}\ \bibnamefont
  {Shultz}},\ }\href
  {http://www.springer.com/birkhauser/mathematics/book/978-0-8176-4319-5}
  {\emph {\bibinfo {title} {{Geometry of State Spaces of Operator Algebras}}}}\
  (\bibinfo  {publisher} {Birkh{\"{a}}user},\ \bibinfo {address} {Boston},\
  \bibinfo {year} {2003})\BibitemShut {NoStop}%
\bibitem [{\citenamefont {McCrimmon}(2004)}]{McCrimmon04}%
  \BibitemOpen
  \bibfield  {author} {\bibinfo {author} {\bibfnamefont {K.}~\bibnamefont
  {McCrimmon}},\ }\href {\doibase 10.1007/b97489} {\emph {\bibinfo {title} {{A
  Taste of Jordan Algebras}}}},\ Universitext\ (\bibinfo  {publisher}
  {Springer-Verlag},\ \bibinfo {address} {New York},\ \bibinfo {year}
  {2004})\BibitemShut {NoStop}%
\bibitem [{\citenamefont {Baez}(2012)}]{Baez12}%
  \BibitemOpen
  \bibfield  {author} {\bibinfo {author} {\bibfnamefont {J.~C.}\ \bibnamefont
  {Baez}},\ }\bibfield  {title} {\enquote {\bibinfo {title} {{Division Algebras
  and Quantum Theory}},}\ }\href {\doibase 10.1007/s10701-011-9566-z}
  {\bibfield  {journal} {\bibinfo  {journal} {Foundations of Physics}\ }\textbf
  {\bibinfo {volume} {42}},\ \bibinfo {pages} {819--855} (\bibinfo {year}
  {2012})}\BibitemShut {NoStop}%
\bibitem [{\citenamefont {Wilce}(2012)}]{Wilce09}%
  \BibitemOpen
  \bibfield  {author} {\bibinfo {author} {\bibfnamefont {A.}~\bibnamefont
  {Wilce}},\ }\bibfield  {title} {\enquote {\bibinfo {title} {Four and a half
  axioms for finite-dimensional quantum probability},}\ }in\ \href {\doibase
  10.1007/978-94-017-7303-4_8} {\emph {\bibinfo {booktitle} {Probability in
  Physics}}},\ \bibinfo {editor} {edited by\ \bibinfo {editor} {\bibfnamefont
  {Y.}~\bibnamefont {Ben-Menahem}}\ and\ \bibinfo {editor} {\bibfnamefont
  {M.}~\bibnamefont {Hemmo}}}\ (\bibinfo  {publisher} {Springer},\ \bibinfo
  {year} {2012})\ pp.\ \bibinfo {pages} {281 -- 298}\BibitemShut {NoStop}%
\bibitem [{\citenamefont {Wilce}(2019)}]{Wilce12}%
  \BibitemOpen
  \bibfield  {author} {\bibinfo {author} {\bibfnamefont {A}~\bibnamefont
  {Wilce}},\ }\bibfield  {title} {\enquote {\bibinfo {title} {Conjugates,
  filters and quantum mechanics},}\ }\href {\doibase 10.22331/q-2019-07-08-158}
  {\bibfield  {journal} {\bibinfo  {journal} {Quantum}\ }\textbf {\bibinfo
  {volume} {3}},\ \bibinfo {pages} {158} (\bibinfo {year} {2019})}\BibitemShut
  {NoStop}%
\bibitem [{\citenamefont {Barnum}\ and\ \citenamefont
  {Wilce}(2014)}]{BarnumWilce14}%
  \BibitemOpen
  \bibfield  {author} {\bibinfo {author} {\bibfnamefont {H.}~\bibnamefont
  {Barnum}}\ and\ \bibinfo {author} {\bibfnamefont {A.}~\bibnamefont {Wilce}},\
  }\bibfield  {title} {\enquote {\bibinfo {title} {Local tomography and the
  jordan structure of quantum theory},}\ }\href {\doibase
  10.1007/s10701-014-9777-1} {\bibfield  {journal} {\bibinfo  {journal}
  {Foundations of Physics}\ }\textbf {\bibinfo {volume} {44}},\ \bibinfo
  {pages} {192 -- 212} (\bibinfo {year} {2014})}\BibitemShut {NoStop}%
\bibitem [{\citenamefont {Barnum}\ and\ \citenamefont
  {Hilgert}(2019)}]{BarnumH19}%
  \BibitemOpen
  \bibfield  {author} {\bibinfo {author} {\bibfnamefont {H.}~\bibnamefont
  {Barnum}}\ and\ \bibinfo {author} {\bibfnamefont {J.}~\bibnamefont
  {Hilgert}},\ }\bibfield  {title} {\enquote {\bibinfo {title} {{Strongly
  symmetric spectral convex bodies are Jordan algebra state spaces}},}\ }\href
  {http://arxiv.org/abs/1904.03753} {\  (\bibinfo {year} {2019})},\ \Eprint
  {http://arxiv.org/abs/1904.03753} {arXiv:1904.03753} \BibitemShut {NoStop}%
\bibitem [{\citenamefont {Graydon}(2011)}]{Graydon11}%
  \BibitemOpen
  \bibfield  {author} {\bibinfo {author} {\bibfnamefont {M.~A.}\ \bibnamefont
  {Graydon}},\ }\emph {\bibinfo {title} {{Quaternions and Quantum Theory}}},\
  \href {http://hdl.handle.net/10012/6168} {Master's thesis},\ \bibinfo
  {school} {University of Waterloo} (\bibinfo {year} {2011})\BibitemShut
  {NoStop}%
\bibitem [{\citenamefont {Paw{\l}owski}\ and\ \citenamefont
  {Winter}(2012)}]{PawlowskiW12}%
  \BibitemOpen
  \bibfield  {author} {\bibinfo {author} {\bibfnamefont {M.}~\bibnamefont
  {Paw{\l}owski}}\ and\ \bibinfo {author} {\bibfnamefont {A.}~\bibnamefont
  {Winter}},\ }\bibfield  {title} {\enquote {\bibinfo {title} {{``Hyperbits'':
  The information quasiparticles}},}\ }\href {\doibase
  10.1103/PhysRevA.85.022331} {\bibfield  {journal} {\bibinfo  {journal}
  {Physical Review A}\ }\textbf {\bibinfo {volume} {85}},\ \bibinfo {pages}
  {022331} (\bibinfo {year} {2012})}\BibitemShut {NoStop}%
\bibitem [{\citenamefont {Paterek}\ \emph {et~al.}(2010)\citenamefont
  {Paterek}, \citenamefont {Dak\'ic},\ and\ \citenamefont
  {Brukner}}]{PaterekDB10}%
  \BibitemOpen
  \bibfield  {author} {\bibinfo {author} {\bibfnamefont {T.}~\bibnamefont
  {Paterek}}, \bibinfo {author} {\bibfnamefont {B.}~\bibnamefont {Dak\'ic}}, \
  and\ \bibinfo {author} {\bibfnamefont {\v{C}.}\ \bibnamefont {Brukner}},\
  }\bibfield  {title} {\enquote {\bibinfo {title} {Theories of systems with
  limited information content},}\ }\href {\doibase
  10.1088/1367-2630/12/5/053037} {\bibfield  {journal} {\bibinfo  {journal}
  {New J. Phys.}\ }\textbf {\bibinfo {volume} {12}},\ \bibinfo {pages} {053037}
  (\bibinfo {year} {2010})}\BibitemShut {NoStop}%
\bibitem [{\citenamefont {Krumm}\ and\ \citenamefont
  {M\"uller}(2019)}]{KrummM19}%
  \BibitemOpen
  \bibfield  {author} {\bibinfo {author} {\bibfnamefont {M.}~\bibnamefont
  {Krumm}}\ and\ \bibinfo {author} {\bibfnamefont {M.~P.}\ \bibnamefont
  {M\"uller}},\ }\bibfield  {title} {\enquote {\bibinfo {title} {Quantum
  computation is the unique reversible circuit model for which bits are
  balls},}\ }\href {\doibase 10.1038/s41534-018-0123-x} {\bibfield  {journal}
  {\bibinfo  {journal} {npj Quantum Information}\ }\textbf {\bibinfo {volume}
  {5}},\ \bibinfo {pages} {7} (\bibinfo {year} {2019})}\BibitemShut {NoStop}%
\bibitem [{\citenamefont {Garner}\ \emph {et~al.}(2017)\citenamefont {Garner},
  \citenamefont {M{\"{u}}ller},\ and\ \citenamefont {Dahlsten}}]{GarnerMD17}%
  \BibitemOpen
  \bibfield  {author} {\bibinfo {author} {\bibfnamefont {A.~J.~P.}\
  \bibnamefont {Garner}}, \bibinfo {author} {\bibfnamefont {M.~P.}\
  \bibnamefont {M{\"{u}}ller}}, \ and\ \bibinfo {author} {\bibfnamefont
  {O.~C.~O.}\ \bibnamefont {Dahlsten}},\ }\bibfield  {title} {\enquote
  {\bibinfo {title} {{The complex and quaternionic quantum bit from relativity
  of simultaneity on an interferometer}},}\ }\href {\doibase
  10.1098/rspa.2017.0596} {\bibfield  {journal} {\bibinfo  {journal}
  {Proceedings of the Royal Society A: Mathematical, Physical and Engineering
  Science}\ }\textbf {\bibinfo {volume} {473}},\ \bibinfo {pages} {20170596}
  (\bibinfo {year} {2017})},\ \Eprint {http://arxiv.org/abs/1412.7112}
  {1412.7112} \BibitemShut {NoStop}%
\bibitem [{\citenamefont {Bartlett}\ \emph {et~al.}(2007)\citenamefont
  {Bartlett}, \citenamefont {Rudolph},\ and\ \citenamefont
  {Spekkens}}]{BartlettRS07}%
  \BibitemOpen
  \bibfield  {author} {\bibinfo {author} {\bibfnamefont {S.~D.}\ \bibnamefont
  {Bartlett}}, \bibinfo {author} {\bibfnamefont {T.}~\bibnamefont {Rudolph}}, \
  and\ \bibinfo {author} {\bibfnamefont {R.~W.}\ \bibnamefont {Spekkens}},\
  }\bibfield  {title} {\enquote {\bibinfo {title} {{Reference frames,
  superselection rules, and quantum information}},}\ }\href {\doibase
  10.1103/RevModPhys.79.555} {\bibfield  {journal} {\bibinfo  {journal}
  {Reviews of Modern Physics}\ }\textbf {\bibinfo {volume} {79}},\ \bibinfo
  {pages} {555--609} (\bibinfo {year} {2007})}\BibitemShut {NoStop}%
\bibitem [{\citenamefont {Spekkens}(2005)}]{Spekkens05}%
  \BibitemOpen
  \bibfield  {author} {\bibinfo {author} {\bibfnamefont {R.~W.}\ \bibnamefont
  {Spekkens}},\ }\bibfield  {title} {\enquote {\bibinfo {title} {{Contextuality
  for preparations, transformations, and unsharp measurements}},}\ }\href
  {\doibase 10.1103/PhysRevA.71.052108} {\bibfield  {journal} {\bibinfo
  {journal} {Physical Review A - Atomic, Molecular, and Optical Physics}\
  }\textbf {\bibinfo {volume} {71}},\ \bibinfo {pages} {052108} (\bibinfo
  {year} {2005})},\ \Eprint {http://arxiv.org/abs/0406166} {0406166}
  \BibitemShut {NoStop}%
\bibitem [{\citenamefont {Schmid}\ \emph {et~al.}(2019)\citenamefont {Schmid},
  \citenamefont {Selby}, \citenamefont {Wolfe}, \citenamefont {Kunjwal},\ and\
  \citenamefont {Spekkens}}]{SchmidSWKW19}%
  \BibitemOpen
  \bibfield  {author} {\bibinfo {author} {\bibfnamefont {D.}~\bibnamefont
  {Schmid}}, \bibinfo {author} {\bibfnamefont {J.}~\bibnamefont {Selby}},
  \bibinfo {author} {\bibfnamefont {E.}~\bibnamefont {Wolfe}}, \bibinfo
  {author} {\bibfnamefont {R.}~\bibnamefont {Kunjwal}}, \ and\ \bibinfo
  {author} {\bibfnamefont {R.~W.}\ \bibnamefont {Spekkens}},\ }\bibfield
  {title} {\enquote {\bibinfo {title} {{The Characterization of
  Noncontextuality in the Framework of Generalized Probabilistic Theories}},}\
  }\href {http://arxiv.org/abs/1911.10386} {\  (\bibinfo {year} {2019})},\
  \Eprint {http://arxiv.org/abs/1911.10386} {arXiv:1911.10386} \BibitemShut
  {NoStop}%
\bibitem [{\citenamefont {Shahandeh}(2019)}]{Shahandeh19}%
  \BibitemOpen
  \bibfield  {author} {\bibinfo {author} {\bibfnamefont {F.}~\bibnamefont
  {Shahandeh}},\ }\bibfield  {title} {\enquote {\bibinfo {title}
  {{Contextuality of general probabilistic theories and the power of a single
  resource}},}\ }\href {http://arxiv.org/abs/1911.11059} {\  (\bibinfo {year}
  {2019})},\ \Eprint {http://arxiv.org/abs/1911.11059} {arXiv:1911.11059}
  \BibitemShut {NoStop}%
\bibitem [{\citenamefont {Aliprantis}\ and\ \citenamefont
  {Tourky}(2007)}]{AliprantisT07}%
  \BibitemOpen
  \bibfield  {author} {\bibinfo {author} {\bibfnamefont {C.~D.}\ \bibnamefont
  {Aliprantis}}\ and\ \bibinfo {author} {\bibfnamefont {R}~\bibnamefont
  {Tourky}},\ }\href@noop {} {\emph {\bibinfo {title} {Cones and Duality}}}\
  (\bibinfo  {publisher} {American Mathematical Society},\ \bibinfo {year}
  {2007})\BibitemShut {NoStop}%
\bibitem [{\citenamefont {Masanes}\ and\ \citenamefont
  {M\"uller}(2011)}]{MasanesMueller2011}%
  \BibitemOpen
  \bibfield  {author} {\bibinfo {author} {\bibfnamefont {L.}~\bibnamefont
  {Masanes}}\ and\ \bibinfo {author} {\bibfnamefont {M.~P.}\ \bibnamefont
  {M\"uller}},\ }\bibfield  {title} {\enquote {\bibinfo {title} {A derivation
  of quantum theory from physical requirements},}\ }\href {\doibase
  10.1088/1367-2630/13/6/063001} {\bibfield  {journal} {\bibinfo  {journal}
  {New J. Phys.}\ }\textbf {\bibinfo {volume} {13}},\ \bibinfo {pages} {063001}
  (\bibinfo {year} {2011})}\BibitemShut {NoStop}%
\bibitem [{\citenamefont {Barrett}(2007)}]{Barrett07}%
  \BibitemOpen
  \bibfield  {author} {\bibinfo {author} {\bibfnamefont {J.}~\bibnamefont
  {Barrett}},\ }\bibfield  {title} {\enquote {\bibinfo {title} {{Information
  processing in generalized probabilistic theories}},}\ }\href {\doibase
  10.1103/PhysRevA.75.032304} {\bibfield  {journal} {\bibinfo  {journal} {Phys.
  Rev. A}\ }\textbf {\bibinfo {volume} {75}},\ \bibinfo {pages} {32304}
  (\bibinfo {year} {2007})}\BibitemShut {NoStop}%
\bibitem [{\citenamefont {Janotta}\ and\ \citenamefont
  {Lal}(2013)}]{JanottaLal13}%
  \BibitemOpen
  \bibfield  {author} {\bibinfo {author} {\bibfnamefont {P.}~\bibnamefont
  {Janotta}}\ and\ \bibinfo {author} {\bibfnamefont {R.}~\bibnamefont {Lal}},\
  }\bibfield  {title} {\enquote {\bibinfo {title} {Generalized probabilistic
  theories without the no-restriction hypothesis},}\ }\href {\doibase
  10.1103/PhysRevA.87.052131} {\bibfield  {journal} {\bibinfo  {journal}
  {Physical Review A}\ }\textbf {\bibinfo {volume} {87}},\ \bibinfo {pages}
  {052131} (\bibinfo {year} {2013})}\BibitemShut {NoStop}%
\bibitem [{\citenamefont {Nielsen}\ and\ \citenamefont
  {Chuang}(2000)}]{NielsenC00}%
  \BibitemOpen
  \bibfield  {author} {\bibinfo {author} {\bibfnamefont {M.~A.}\ \bibnamefont
  {Nielsen}}\ and\ \bibinfo {author} {\bibfnamefont {I.~L.}\ \bibnamefont
  {Chuang}},\ }\href
  {http://www.amazon.com/exec/obidos/redirect?tag=citeulike07-20{\&}path=ASIN/0521635039}
  {\emph {\bibinfo {title} {{Quantum Computation and Quantum Information}}}}\
  (\bibinfo  {publisher} {Cambridge University Press},\ \bibinfo {address}
  {Cambridge},\ \bibinfo {year} {2000})\BibitemShut {NoStop}%
\bibitem [{\citenamefont {Faraut}\ and\ \citenamefont
  {Kor{\'{a}}nyi}(1994)}]{FarautK94}%
  \BibitemOpen
  \bibfield  {author} {\bibinfo {author} {\bibfnamefont {J.}~\bibnamefont
  {Faraut}}\ and\ \bibinfo {author} {\bibfnamefont {A.}~\bibnamefont
  {Kor{\'{a}}nyi}},\ }\href@noop {} {\emph {\bibinfo {title} {{Analysis on
  Symmetric Cones}}}}\ (\bibinfo  {publisher} {Oxford University Press},\
  \bibinfo {address} {Oxford},\ \bibinfo {year} {1994})\BibitemShut {NoStop}%
\bibitem [{\citenamefont {Barnum}\ and\ \citenamefont {Lami}()}]{BarnumL20}%
  \BibitemOpen
  \bibfield  {author} {\bibinfo {author} {\bibfnamefont {H.}~\bibnamefont
  {Barnum}}\ and\ \bibinfo {author} {\bibfnamefont {L.}~\bibnamefont {Lami}},\
  }\href@noop {} {\enquote {\bibinfo {title} {Classical embeddability of
  general probabilistic theories},}\ }\bibinfo {note}
  {{forthcoming}}\BibitemShut {NoStop}%
\bibitem [{\citenamefont {Shor}(1995)}]{Shor95}%
  \BibitemOpen
  \bibfield  {author} {\bibinfo {author} {\bibfnamefont {P.~W.}\ \bibnamefont
  {Shor}},\ }\bibfield  {title} {\enquote {\bibinfo {title} {{Scheme for
  reducing decoherence in quantum computer memory}},}\ }\href {\doibase
  10.1103/PhysRevA.52.R2493} {\bibfield  {journal} {\bibinfo  {journal}
  {Physical Review A}\ }\textbf {\bibinfo {volume} {52}} (\bibinfo {year}
  {1995}),\ 10.1103/PhysRevA.52.R2493}\BibitemShut {NoStop}%
\bibitem [{\citenamefont {Tsirel'son}(1987)}]{Tsirelson87}%
  \BibitemOpen
  \bibfield  {author} {\bibinfo {author} {\bibfnamefont {B.~S.}\ \bibnamefont
  {Tsirel'son}},\ }\bibfield  {title} {\enquote {\bibinfo {title} {{Quantum
  analogues of the Bell inequalities. The case of two spatially separated
  domains}},}\ }\href {\doibase 10.1007/BF01663472} {\bibfield  {journal}
  {\bibinfo  {journal} {Journal of Soviet Mathematics}\ }\textbf {\bibinfo
  {volume} {36}},\ \bibinfo {pages} {557--570} (\bibinfo {year}
  {1987})}\BibitemShut {NoStop}%
\bibitem [{\citenamefont {Kleinmann}\ \emph {et~al.}(2013)\citenamefont
  {Kleinmann}, \citenamefont {Osborne}, \citenamefont {Scholz},\ and\
  \citenamefont {Werner}}]{KleinmannOSW13}%
  \BibitemOpen
  \bibfield  {author} {\bibinfo {author} {\bibfnamefont {M.}~\bibnamefont
  {Kleinmann}}, \bibinfo {author} {\bibfnamefont {T.~J.}\ \bibnamefont
  {Osborne}}, \bibinfo {author} {\bibfnamefont {V.~B.}\ \bibnamefont {Scholz}},
  \ and\ \bibinfo {author} {\bibfnamefont {A.~H.}\ \bibnamefont {Werner}},\
  }\bibfield  {title} {\enquote {\bibinfo {title} {{Typical Local Measurements
  in Generalized Probabilistic Theories: Emergence of Quantum Bipartite
  Correlations}},}\ }\href {\doibase 10.1103/PhysRevLett.110.040403} {\bibfield
   {journal} {\bibinfo  {journal} {Physical Review Letters}\ }\textbf {\bibinfo
  {volume} {110}},\ \bibinfo {pages} {040403} (\bibinfo {year}
  {2013})}\BibitemShut {NoStop}%
\bibitem [{\citenamefont {Barnum}\ \emph {et~al.}(2016)\citenamefont {Barnum},
  \citenamefont {Graydon},\ and\ \citenamefont {Wilce}}]{BarnumGW16}%
  \BibitemOpen
  \bibfield  {author} {\bibinfo {author} {\bibfnamefont {H.}~\bibnamefont
  {Barnum}}, \bibinfo {author} {\bibfnamefont {M.}~\bibnamefont {Graydon}}, \
  and\ \bibinfo {author} {\bibfnamefont {A.}~\bibnamefont {Wilce}},\ }\bibfield
   {title} {\enquote {\bibinfo {title} {{Composites and Categories of Euclidean
  Jordan Algebras}},}\ }\href {http://arxiv.org/abs/1606.09331} {\  (\bibinfo
  {year} {2016})},\ \Eprint {http://arxiv.org/abs/1606.09331}
  {arXiv:1606.09331} \BibitemShut {NoStop}%
\bibitem [{\citenamefont {Popescu}\ and\ \citenamefont
  {Rohrlich}(1994)}]{PopescuR94}%
  \BibitemOpen
  \bibfield  {author} {\bibinfo {author} {\bibfnamefont {S.}~\bibnamefont
  {Popescu}}\ and\ \bibinfo {author} {\bibfnamefont {D.}~\bibnamefont
  {Rohrlich}},\ }\bibfield  {title} {\enquote {\bibinfo {title} {{Quantum
  nonlocality as an axiom}},}\ }\href {http://dx.doi.org/10.1007/BF02058098}
  {\bibfield  {journal} {\bibinfo  {journal} {Found. Phys.}\ }\textbf {\bibinfo
  {volume} {24}},\ \bibinfo {pages} {379--385} (\bibinfo {year}
  {1994})}\BibitemShut {NoStop}%
\bibitem [{\citenamefont {Holevo}(1982)}]{Holevo82}%
  \BibitemOpen
  \bibfield  {author} {\bibinfo {author} {\bibfnamefont {A.}~\bibnamefont
  {Holevo}},\ }\href@noop {} {\emph {\bibinfo {title} {{Probabilistic and
  Statistical Aspects of Quantum Theory}}}}\ (\bibinfo  {publisher} {North
  Holland},\ \bibinfo {address} {Amsterdam},\ \bibinfo {year}
  {1982})\BibitemShut {NoStop}%
\bibitem [{\citenamefont {Effros}\ and\ \citenamefont
  {St{\o}rmer}(1979)}]{EffrosS79}%
  \BibitemOpen
  \bibfield  {author} {\bibinfo {author} {\bibfnamefont {E.~G.}\ \bibnamefont
  {Effros}}\ and\ \bibinfo {author} {\bibfnamefont {E.}~\bibnamefont
  {St{\o}rmer}},\ }\bibfield  {title} {\enquote {\bibinfo {title} {{Positive
  projections and Jordan structure in operator algebras}},}\ }\href {\doibase
  10.2307/24491361} {\bibfield  {journal} {\bibinfo  {journal} {Mathematica
  Scandinavica}\ }\textbf {\bibinfo {volume} {45}},\ \bibinfo {pages}
  {127--138} (\bibinfo {year} {1979})}\BibitemShut {NoStop}%
\bibitem [{\citenamefont {Idel}(2013)}]{Idel13}%
  \BibitemOpen
  \bibfield  {author} {\bibinfo {author} {\bibfnamefont {M.}~\bibnamefont
  {Idel}},\ }\emph {\bibinfo {title} {On the structure of positive maps}},\
  \href@noop {} {Master's thesis},\ \bibinfo  {school} {Technische
  Universit\"at M\"unchen; Ludwig-Maximilians-Universit\"at M\"unchen}
  (\bibinfo {year} {2013})\BibitemShut {NoStop}%
\bibitem [{\citenamefont {Kadison}(1952)}]{Kadison52}%
  \BibitemOpen
  \bibfield  {author} {\bibinfo {author} {\bibfnamefont {R.~V.}\ \bibnamefont
  {Kadison}},\ }\bibfield  {title} {\enquote {\bibinfo {title} {{A Generalized
  Schwarz Inequality and Algebraic Invariants for Operator Algebras}},}\ }\href
  {\doibase 10.2307/1969657} {\bibfield  {journal} {\bibinfo  {journal} {The
  Annals of Mathematics}\ }\textbf {\bibinfo {volume} {56}},\ \bibinfo {pages}
  {494} (\bibinfo {year} {1952})}\BibitemShut {NoStop}%
\bibitem [{\citenamefont {Barnum}\ \emph {et~al.}(2010)\citenamefont {Barnum},
  \citenamefont {Beigi}, \citenamefont {Boixo}, \citenamefont {Elliott},\ and\
  \citenamefont {Wehner}}]{BarnumBBEW10}%
  \BibitemOpen
  \bibfield  {author} {\bibinfo {author} {\bibfnamefont {H.}~\bibnamefont
  {Barnum}}, \bibinfo {author} {\bibfnamefont {S.}~\bibnamefont {Beigi}},
  \bibinfo {author} {\bibfnamefont {S.}~\bibnamefont {Boixo}}, \bibinfo
  {author} {\bibfnamefont {M.~B.}\ \bibnamefont {Elliott}}, \ and\ \bibinfo
  {author} {\bibfnamefont {S.}~\bibnamefont {Wehner}},\ }\bibfield  {title}
  {\enquote {\bibinfo {title} {{Local Quantum Measurement and No-Signaling
  Imply Quantum Correlations}},}\ }\href {\doibase
  10.1103/PhysRevLett.104.140401} {\bibfield  {journal} {\bibinfo  {journal}
  {Phys. Rev. Lett.}\ }\textbf {\bibinfo {volume} {104}},\ \bibinfo {pages}
  {140401} (\bibinfo {year} {2010})}\BibitemShut {NoStop}%
\bibitem [{\citenamefont {Stinespring}(1955)}]{Stinespring55}%
  \BibitemOpen
  \bibfield  {author} {\bibinfo {author} {\bibfnamefont {W.~F.}\ \bibnamefont
  {Stinespring}},\ }\bibfield  {title} {\enquote {\bibinfo {title} {{Positive
  Functions on C*-Algebras}},}\ }\href {\doibase 10.2307/2032342} {\bibfield
  {journal} {\bibinfo  {journal} {Proceedings of the American Mathematical
  Society}\ }\textbf {\bibinfo {volume} {6}},\ \bibinfo {pages} {211--216}
  (\bibinfo {year} {1955})}\BibitemShut {NoStop}%
\bibitem [{\citenamefont {Choi}(1975)}]{Choi75}%
  \BibitemOpen
  \bibfield  {author} {\bibinfo {author} {\bibfnamefont {M.-D.}\ \bibnamefont
  {Choi}},\ }\bibfield  {title} {\enquote {\bibinfo {title} {{Completely
  positive linear maps on complex matrices}},}\ }\href {\doibase
  10.1016/0024-3795(75)90075-0} {\bibfield  {journal} {\bibinfo  {journal}
  {Linear Algebra and its Applications}\ }\textbf {\bibinfo {volume} {10}},\
  \bibinfo {pages} {285--290} (\bibinfo {year} {1975})}\BibitemShut {NoStop}%
\bibitem [{\citenamefont {Wolf}(2012)}]{Wolf12}%
  \BibitemOpen
  \bibfield  {author} {\bibinfo {author} {\bibfnamefont {M.~M.}\ \bibnamefont
  {Wolf}},\ }\href
  {https://www-m5.ma.tum.de/foswiki/pub/M5/Allgemeines/MichaelWolf/QChannelLecture.pdf}
  {\enquote {\bibinfo {title} {{Quantum Channels \& Operations: Guided
  Tour}},}\ } (\bibinfo {year} {2012})\BibitemShut {NoStop}%
\bibitem [{\citenamefont {Arias}\ \emph {et~al.}(2002)\citenamefont {Arias},
  \citenamefont {Gheondea},\ and\ \citenamefont {Gudder}}]{Asias02}%
  \BibitemOpen
  \bibfield  {author} {\bibinfo {author} {\bibfnamefont {A.}~\bibnamefont
  {Arias}}, \bibinfo {author} {\bibfnamefont {A.}~\bibnamefont {Gheondea}}, \
  and\ \bibinfo {author} {\bibfnamefont {S.}~\bibnamefont {Gudder}},\
  }\bibfield  {title} {\enquote {\bibinfo {title} {Fixed points of quantum
  operations},}\ }\href@noop {} {\bibfield  {journal} {\bibinfo  {journal}
  {Journal of Mathematical Physics}\ }\textbf {\bibinfo {volume} {43}}
  (\bibinfo {year} {2002})}\BibitemShut {NoStop}%
\bibitem [{\citenamefont {Harrigan}\ and\ \citenamefont
  {Spekkens}(2010)}]{HarriganS10}%
  \BibitemOpen
  \bibfield  {author} {\bibinfo {author} {\bibfnamefont {N.}~\bibnamefont
  {Harrigan}}\ and\ \bibinfo {author} {\bibfnamefont {R.~W.}\ \bibnamefont
  {Spekkens}},\ }\bibfield  {title} {\enquote {\bibinfo {title} {{Einstein,
  incompleteness, and the epistemic view of quantum states}},}\ }\href
  {\doibase 10.1007/s10701-009-9347-0} {\bibfield  {journal} {\bibinfo
  {journal} {Foundations of Physics}\ }\textbf {\bibinfo {volume} {40}},\
  \bibinfo {pages} {125--157} (\bibinfo {year} {2010})},\ \Eprint
  {http://arxiv.org/abs/0706.2661} {0706.2661} \BibitemShut {NoStop}%
\bibitem [{\citenamefont {Mazurek}\ \emph {et~al.}(2017)\citenamefont
  {Mazurek}, \citenamefont {Pusey}, \citenamefont {Resch},\ and\ \citenamefont
  {Spekkens}}]{MazurekPRS17}%
  \BibitemOpen
  \bibfield  {author} {\bibinfo {author} {\bibfnamefont {M.~D.}\ \bibnamefont
  {Mazurek}}, \bibinfo {author} {\bibfnamefont {M.~F.}\ \bibnamefont {Pusey}},
  \bibinfo {author} {\bibfnamefont {K.~J.}\ \bibnamefont {Resch}}, \ and\
  \bibinfo {author} {\bibfnamefont {R.~W.}\ \bibnamefont {Spekkens}},\
  }\bibfield  {title} {\enquote {\bibinfo {title} {{Experimentally bounding
  deviations from quantum theory in the landscape of generalized probabilistic
  theories}},}\ }\href {http://arxiv.org/abs/1710.05948} {\  (\bibinfo {year}
  {2017})},\ \Eprint {http://arxiv.org/abs/1710.05948} {arXiv:1710.05948}
  \BibitemShut {NoStop}%
\bibitem [{\citenamefont {Wood}\ and\ \citenamefont
  {Spekkens}(2015)}]{WoodS15}%
  \BibitemOpen
  \bibfield  {author} {\bibinfo {author} {\bibfnamefont {C.~J.}\ \bibnamefont
  {Wood}}\ and\ \bibinfo {author} {\bibfnamefont {R.~W.}\ \bibnamefont
  {Spekkens}},\ }\bibfield  {title} {\enquote {\bibinfo {title} {{The lesson of
  causal discovery algorithms for quantum correlations: causal explanations of
  Bell-inequality violations require fine-tuning}},}\ }\href {\doibase
  10.1088/1367-2630/17/3/033002} {\bibfield  {journal} {\bibinfo  {journal}
  {New Journal of Physics}\ }\textbf {\bibinfo {volume} {17}},\ \bibinfo
  {pages} {033002} (\bibinfo {year} {2015})}\BibitemShut {NoStop}%
\end{thebibliography}%

\end{document}